\colorlet{snippet}{urlcolor}
\newcommand\MnFont[1]{
 \DeclareFontFamily{U}{MnSymbol#1}{}
 \DeclareSymbolFont{MnSy#1}{U}{MnSymbol#1}{m}{n}
 \SetSymbolFont{MnSy#1}{bold}{U}{MnSymbol#1}{b}{n}
 \DeclareFontShape{U}{MnSymbol#1}{m}{n}{
     <-6>  MnSymbol#15
    <6-7>  MnSymbol#16
    <7-8>  MnSymbol#17
    <8-9>  MnSymbol#18
    <9-10> MnSymbol#19
   <10-12> MnSymbol#110
   <12->   MnSymbol#112}{}
 \DeclareFontShape{U}{MnSymbol#1}{b}{n}{
     <-6>  MnSymbol#1-Bold5
    <6-7>  MnSymbol#1-Bold6
    <7-8>  MnSymbol#1-Bold7
    <8-9>  MnSymbol#1-Bold8
    <9-10> MnSymbol#1-Bold9
   <10-12> MnSymbol#1-Bold10
   <12->   MnSymbol#1-Bold12}{}
}
\DeclareMathSymbol{\mncirc}{\mathbin}{MnSyC}{'130}
\def\test#1#2#3{\setbox0=\hbox{$\vphantom{#1}^{#2}_{#3}$}%
                \dimen0=\wd0%
                \setbox1=\hbox{$\scriptstyle #2$}%
                \advance\dimen0-\wd1%
                \setbox1=\hbox{\hskip\dimen0\copy1}%
                \dimen0=\wd0%
                \setbox2=\hbox{$\scriptstyle #3$}%
                \advance\dimen0-\wd2%
                \setbox2=\hbox{\hskip\dimen0\copy2}%
                {\vphantom{#1}^{\box1}_{\box2}}{#1}
}
\newcommand{\overlayrel}[4]{\mathrel{%
 \def\next##1##2{%
  \setbox0=\hbox{$##1#3$}%
  \setbox1=\hbox to\wd0{$##1\hfil\mkern#1mu#4\mkern#2mu\hfil$}%
  \dp1=\dp0\ht1=\ht0\wd0=0pt\box0\box1%
 }%
 \mathpalette\next{}%
}}
\newcommand{\dstep}{\overlayrel02{\to}\mncirc}
\theoremstyle{plain}
\newtheorem{theorem}{Theorem}
\newtheorem{lemma}[theorem]{Lemma}
\theoremstyle{definition}
\newtheorem{definition}[theorem]{Definition}
\newtheorem{example}[theorem]{Example}
\title{A Short Mechanized Proof of the Church-Rosser Theorem
by the Z-property for the $\lambda\beta$-calculus in Nominal Isabelle%
\thanks{This work was partially supported by FWF (Austrian Science Fund) projects P27502 and P27528.}}
\author{
Julian Nagele
\and
Vincent van Oostrom
\and
Christian Sternagel
}
\institute{
  University of Innsbruck,
  Austria\\
  \email{\{julian.nagele,vincent.van-oostrom,christian.sternagel\}@uibk.ac.at}
 }
\authorrunning{J. Nagele, V. van Oostrom, and C. Sternagel}
\titlerunning{A Short Mechanized Proof of CR}
\renewcommand\isastyle{\isastyleminor}
\newcommand\afpref[2]{%
  \href{\afpurl/\afphgid/thys/Rewriting_Z/Lambda_Z.thy?at=default&fileviewer=file-view-default\#Lambda_Z.thy-#1}{#2}}
\newcommand{\DefineSnippet}[2]{%
  \expandafter\newcommand\csname snippet--#1\endcsname{#2}}
\newcommand\Snippet[2][1]{%
  \begin{quote}
  \afpref{#1}{%
  \begin{minipage}{\textwidth}%
  \begin{isabelle}%
  \csname snippet--#2\endcsname
  \end{isabelle}
  \end{minipage}}
  \end{quote}}
\newcommand\snippet[2][1]{{%
  \afpref{#1}{{\isastyle\csname snippet--#2\endcsname}}%
}}
\newcommand\subst[3]{#3\,\ensuremath{[}#1\ensuremath{{}:={}}#2\ensuremath{]}}
\newcommand\abs[2]{\ensuremath{\lambda}#1.\,#2}
\newcommand\app{\ }
\newcommand\appbeta{\:\ensuremath{\cdot_\beta}\:}
\newcommand\bstep[2]{#1 \ensuremath{\to_\beta} #2}
\newcommand\bsteps[2]{#1 \ensuremath{\to_\beta^*} #2}
\newcommand\bstepe[2]{#1 \ensuremath{\to_\beta^=} #2}
\newcommand\fresh{\ \ensuremath{\sharp}\ }
\begin{document}

\maketitle

\begin{abstract}
We present a short proof of the Church-Rosser property for the lambda-calculus
enjoying two distinguishing features: firstly, it employs the Z-property, 
resulting in a short and elegant proof;
and secondly, it is formalized in the nominal higher-order logic
available for the proof assistant Isabelle/HOL.
\end{abstract}

\section{Introduction}

Dehornoy proved confluence for the rule of self-distributivity
$xyz \to xz(yz)$\footnote{%
Confluence of this single-rule term rewrite system is non-trivial: presently 
no tool can prove it automatically.
}
by means of a novel method~\cite{D00}, the idea being to give a map $\bullet$
that is \emph{monotonic} with respect to $\to^\ast$ and that yields for each
object an \emph{upper bound} on all objects reachable from it in a single step.
Later, this method was extracted and dubbed the Z-property~\cite{Deho:Oost:08},
and applied to prove confluence of various rewrite systems, in particular the
$\lambda\beta$-calculus.

Here we present our Isabelle/HOL~\cite{Isabelle} formalization 
of part of the above mentioned work,\footnote{The formalization follows
the pen-and-paper proof \emph{exactly}, except for one mistake in Lemma~\ref{lem:rhs} (Rhs).}
in particular that the $\lambda\beta$-calculus
is confluent since it enjoys the Z-property and that the latter property 
is equivalent to an abstract version of Takahashi's confluence method~\cite{T95}.
We achieve a rigorous treatment of terms modulo $\alpha$-equivalence by employing 
Nominal Isabelle~\cite{UK12}, a nominal higher-order logic based on Isabelle/HOL.
Our formalization is available from the \emph{archive of formal proofs}~\cite{FNOS16}.
Below, Isabelle code-snippets are in blue and hyperlinked.

\section{Nominal $\lambda$-terms}

In our formalization $\lambda$-terms are represented by the following nominal data
type, where the annotation ``\isakeyword{binds} \isa{x} \isakeyword{in} \isa{t}'' indicates that the equality of such abstraction terms is up to renaming of $x$ in $t$:
\Snippet[24]{term_type}
For the sake of readability we will use standard notation, i.e., $x$ instead of
$\textit{Var}~x$, $s\app t$ instead of $\textit{App}~s~t$, and $\abs{x}{t}$
instead of $\textit{Abs}~x~t$, in the remainder.
When defining (recursive) functions on $\lambda$-terms, we may have to take care
of so-called \emph{freshness constraints}. A freshness constraint is written $x
\fresh t$ and states that $x$ does not occur in $t$, or equivalently, $x$ is
fresh for $t$. 
\begin{definition}
Capture-avoiding substitution is defined recursively by the following equations:
\Snippet[83]{subst}
Due to the constraint, the final equation is only applicable when $y$ is fresh for $x$ and $s$.
\end{definition}
In principle it is always possible to rename variables in terms (or any finitely
supported structure) apart from a given finite collection of variables. In
order to relieve the user of doing so by hand, Nominal Isabelle~\cite{UK12}
provides infrastructure for defining nominal functions, giving rise to strong
induction principles that take care of appropriate renaming. (However, nominal
functions do not come for free: after stating the defining equations, we are
faced with proof obligations that ensure pattern-completeness, termination,
equivariance, and well-definedness. With the help of some home-brewed
Eisbach~\cite{Eisbach} methods we were able to handle those obligations
automatically.) We first consider the Substitution Lemma, cf.~\cite[Lemma~2.1.16]{Bare:84}.
\begin{lemma}
\Snippet[100]{subst_lemma}
\end{lemma}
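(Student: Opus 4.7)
The plan is a structural induction on $t$, relying on Nominal Isabelle's strong induction principle for the \snippet{term_type} datatype so that in the abstraction case the bound variable can be assumed fresh for the parameters.

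First, I would unfold the hypothesis \isa{x\ {\isasymsharp}\ {\isacharparenleft}y{\isacharcomma}\ u{\isacharparenright}} into the two atomic freshness facts \isa{x\ {\isasymsharp}\ y} (equivalently, $x \neq y$) and \isa{x\ {\isasymsharp}\ u}. These will be needed in the variable base case to discard the renaming of $x$ under $[y:=u]$, and in the induction step to discharge the side conditions of the third equation of \snippet{subst}.

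The base case splits on whether the variable equals $x$, equals $y$, or is a third variable. If it is $x$, both sides reduce to $s[y:=u]$, the right-hand side using $x \neq y$ and then the substitution equation for matching variables. If it is $y$, both sides reduce to $u$, the right-hand side using $x \fresh u$ to conclude $u[x:=s[y:=u]] = u$ (a standard ``substitution of a fresh variable is identity'' lemma). Otherwise both sides reduce to the variable itself. The application case is immediate from the induction hypotheses applied to the two subterms, after pushing the substitutions inside using the second equation of \snippet{subst}.

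The abstraction case is the point where some care is required, and it is the step I expect to be the main obstacle. Writing $t = \abs{z}{t'}$, I would invoke the strong induction principle to assume that $z$ is fresh for the context $(x, y, s, u)$, hence in particular $z \fresh (x, s)$, $z \fresh (y, u)$, and $z \fresh (x, s[y:=u])$. With these freshness facts available, both sides can be pushed past the binder by three successive applications of the third equation of \snippet{subst}, yielding $\abs{z}{t'[x:=s][y:=u]}$ on the left and $\abs{z}{t'[y:=u][x:=s[y:=u]]}$ on the right. These are equal by the induction hypothesis on $t'$. The only non-routine part is ensuring that all freshness side conditions are discharged automatically; in practice this is handled by Nominal Isabelle's \emph{fresh} simplifier together with the instantiation of the binder to a sufficiently fresh atom supplied by the strong induction rule.
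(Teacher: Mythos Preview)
Your proposal is correct and follows essentially the same route as the paper: nominal induction on $t$ with the bound variable chosen fresh for $x$, $y$, $s$, $u$, the same three-way split in the variable case, and the same push-under-the-binder argument for abstractions. The one place where the paper is slightly more explicit is the side condition $z \fresh \subst{y}{u}{s}$ in the abstraction case, which it notes is itself established by a separate induction on $s$ rather than assumed to fall out of the simplifier.
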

\begin{proof}
In principle the proof proceeds by induction on $t$. However, for the case of
$\lambda$-abstractions we additionally want the bound variable to be fresh for
$s$, $u$, $x$, and $y$. With Nominal Isabelle it is enough to indicate that the
variables of those terms should be avoided in order to obtain appropriately
renamed bound variables. We will not mention this fact again in future
proofs.
\begin{itemize}
\item
In the base case $t = z$ for some variable $z$. If $z = x$ then
$\subst{y}{u}{\subst{x}{s}{t}} = \subst{y}{u}{s}$
and
$\subst{x}{\subst{y}{u}{s}}{\subst{y}{u}{t}} = \subst{y}{u}{s}$,
since then $z
\neq y$ and thus $\subst{y}{u}{z} = z$. Otherwise $z \neq x$. Now if $z = y$,
then
$\subst{y}{u}{\subst{x}{s}{t}} = u$
and
$\subst{x}{\subst{y}{u}{s}}{\subst{y}{u}{t}} = u$, since $x \fresh u$.
If $z \neq y$ then both ends of the equation reduce to $z$ and we are done.

\item
In case of an application, we conclude by definition and twice the IH.

\item
Now for the interesting case. Let $t = \abs{z}{v}$ such that
$z \fresh (s, u, x, y)$. Then
\begin{align*}
\subst{y}{u}{\subst{x}{s}{(\abs{z}{v})}}
&= \abs{z}{\subst{y}{u}{\subst{x}{s}{v}}}
  &&\text{since $z \fresh (s,u,x,y)$}\\
&= \abs{z}{\subst{x}{\subst{y}{u}{s}}{\subst{y}{u}{v}}}
  &&\text{by IH}\\
&= \subst{x}{\subst{y}{u}{s}}{\subst{y}{u}{(\abs{z}{v})}}
  &&\text{since $z \fresh (\subst{y}{u}{s},u,x,y)$}
\end{align*}
where in the last step $z \fresh \subst{y}{u}{s}$ follows from
$z \fresh (s, u, y)$ by induction on $s$.
\qedhere
\end{itemize}
\end{proof}

\begin{definition}
We define $\beta$-reduction inductively by the \emph{compatible} 
closure~\cite[Definition~3.1.4]{Bare:84} of the $\beta$-rule (in its nominal version):
\Snippet[105]{beta_rule}
\end{definition}
\noindent
The freshness constraint on the $\beta$-rule is needed to obtain an induction 
principle strong enough with respect to avoiding capture of bound variables.
%
The following standard ``congruence properties''
(cf.~\cite[Lemma~3.1.6 and Proposition~3.1.16]{Bare:84})
will be used freely in the remainder:
\Snippet[153]{betas_cong}
They are proven along the lines of their textbook proofs, the first two by induction on the length
and the last one by (nominal) induction on $t$ followed by a nested (nominal) induction
on the definition of $\beta$-steps, using the Substitution Lemma.
Furthermore we will make use of the easily proven fact that $\beta$-reduction 
is \emph{coherent} with abstraction:
  \Snippet[217]{Abs_BetasE} 

\section{Z}

We present the Z-property for abstract rewriting, show that it implies confluence,
and then instantiate it for the case of (nominal) $\lambda$-terms modulo $\alpha$ 
equipped with $\beta$-reduction.

\begin{definition}
  A relation $\to$ on $A$ has the \emph{Z}-property if there is a
  map $\bullet : A \to A$ such that
  $a \to b \Longrightarrow b \to^* a^\bullet \wedge a^\bullet \to^* b^\bullet$.
\end{definition}

If $\to$ has the Z-property then it indeed is monotonic, i.e.,
$a \to^* b$ implies $a^\bullet \to^* b^\bullet$, which is straightforward to
show by induction on the length of the former.

\begin{lemma}
  A relation that has the Z-property is confluent.
\end{lemma}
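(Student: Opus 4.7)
The plan is to show that whenever $a \to^* b$ and $a \to^* c$ are reductions of lengths $n$ and $m$ respectively, both $b$ and $c$ reduce to the common iterate $a^{\bullet^{\max(n,m)}}$, where $a^{\bullet^k}$ abbreviates the $k$-fold application of $\bullet$ to $a$. The workhorse is a one-sided key lemma: any reduction $a \to^* b$ of length $n$ satisfies $b \to^* a^{\bullet^n}$.

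I would prove the key lemma by induction on $n$. The base case $n = 0$ is immediate. For the step, decompose the reduction as $a \to a' \to^* b$ with the tail of length $n$; the induction hypothesis gives $b \to^* {a'}^{\bullet^n}$, while Z applied to $a \to a'$ gives $a' \to^* a^\bullet$, and $n$-fold iteration of monotonicity (provided in the excerpt) lifts the latter to ${a'}^{\bullet^n} \to^* a^{\bullet^{n+1}}$. Concatenation completes the step.

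To deduce confluence, assume $a \to^* b$ and $a \to^* c$ with respective lengths $n \leq m$ (without loss of generality). The key lemma supplies $b \to^* a^{\bullet^n}$ and $c \to^* a^{\bullet^m}$, and it remains to chain $a^{\bullet^n} \to^* a^{\bullet^m}$. For this I would first observe that $a \to^* a^\bullet$ whenever $a$ admits any step (prepend that step to the Z-reduction it spawns), then propagate this fact up the bullet-iterates via monotonicity to obtain $a^{\bullet^k} \to^* a^{\bullet^{k+1}}$ for every $k$. The trivial sub-case $n = m = 0$, which absorbs the situation where $a$ is irreducible, needs no work.

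The main delicate point I anticipate is precisely the derivation of $a \to^* a^\bullet$: it is not literally part of the Z-property and has to be coaxed out of it by a small case split on whether $a$ has an outgoing step. Once that is in hand, the rest is pure bookkeeping about stacking bullets and re-applying monotonicity.
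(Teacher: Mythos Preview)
Your argument is correct but takes a different route from the paper. The paper establishes \emph{semi}-confluence directly: given $a \to^* c$ and a single step $a \to d$, it peels off only the \emph{last} step $b \to c$ of the first reduction (if non-empty) and shows that both $c$ and $d$ reduce to $b^\bullet$ --- $c$ by one application of Z, and $d$ via $d \to^* a^\bullet \to^* b^\bullet$ using Z once and then monotonicity along $a \to^* b$. No iterated bullets appear, and no induction is performed beyond what is already packaged inside monotonicity and inside the standard passage from semi-confluence to confluence.

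Your approach instead proves full confluence in one shot by exhibiting the explicit common reduct $a^{\bullet^{\max(n,m)}}$. This costs an inductive key lemma plus the side observation $a \to^* a^\bullet$ for non-normal $a$, so it is a bit longer and has the small case split you noted. In return you obtain a concrete formula for the join point in terms of iterated $\bullet$, which the paper's argument does not supply.
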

\begin{proof}
  We show semi-confluence~\cite{Baad:Nipk:98}. So assume $a \to^* c$ and $a \to d$. We show
  $d \downarrow c$ by case analysis on the reduction from $a$ to $c$. If it is
  empty there is nothing to show. Otherwise there is a $b$ with $a \to^* b$ and
  $b \to c$. Then by monotonicity we have $a^\bullet \to^* b^\bullet$. From
  $a \to d$ we have $d \to^* a^\bullet$ using the Z-property, so in total
  $d \to^* b^\bullet$.  Since by applying the Z-property to $b \to c$ we also
  get $c \to^* b^\bullet$ we have $d \downarrow c$ as desired.
\end{proof}

There are two natural choices for functions on $\lambda$-terms that yield the
Z-property for $\to_\beta$, namely the full-development function
and the full-superdevelopment function.
The former maps a term  to the result of contracting all residuals of 
redexes in it~\cite[Definition~13.2.7]{Bare:84} and the latter 
also contracts the upward-created redexes, cf.~\cite[Section~2.7]{Raam:96}.
While Dehornoy and van Oostrom developed both proofs~\cite{Deho:Oost:08}, here
we opt for the latter, which requires less case analysis.

\begin{definition}
  We first define a variant of $\textit{App}$ with built-in $\beta$-reduction
  at the root:
  \Snippet[130]{app_beta}
\end{definition}
An easy case analysis on the first argument shows that this function satisfies
the congruence-like property \mbox{\snippet[279]{app_beta_Betas}}.

\begin{definition}
  \label{lambda_bullet}
  The full-superdevelopment function $\bullet$ on
  $\lambda$-terms is defined as follows:

  \Snippet[138]{lambda_bullet}
\end{definition}

Below, we freely use the fact that \snippet[238]{bullet_App}, which is shown by
considering whether or not $s^\bullet$ is an abstraction.
The structure of the proof that the $\lambda\beta$-calculus has the Z-property 
follows that for self-distributivity in that it build on the Self- and Rhs-properties.
The former expresses that each term \emph{self}-expands to its full-superdevelopment,
and the latter that applying $\bullet$ to the
\emph{right-hand side} of the $\beta$-rule, i.e., to the result of a substitution,
``does more'' than applying the map to its components first. Each is proven by structural induction.
\begin{lemma}[Self]
  \label{self}
  For all terms $t$ we have \snippet[167]{self}.
\end{lemma}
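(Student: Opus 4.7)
The plan is to proceed by structural induction on the term \isa{t}, using the strong nominal induction principle so that in the abstraction case the bound variable can be assumed fresh for whatever context we later need (in fact we need no freshness here, but it is cheap to have).

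In the variable case \isa{t = x}, we have \isa{x\isactrlsup \isasymbullet\ =\ x} directly from the definition of $\bullet$, so $x \to_\beta^* x$ is reflexivity. In the abstraction case \isa{t = \abs{x}{s}}, the induction hypothesis gives \isa{\bsteps{s}{s\isactrlsup \isasymbullet}}, and the second clause of the congruence block \snippet[153]{betas_cong} immediately lifts this under the binder to \isa{\bsteps{\abs{x}{s}}{\abs{x}{s\isactrlsup \isasymbullet}}}, which equals \isa{\bsteps{\abs{x}{s}}{{(\abs{x}{s})}\isactrlsup \isasymbullet}} by the second defining equation of $\bullet$.

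The application case \isa{t = s\app u} is the only one with any content. From the two induction hypotheses \isa{\bsteps{s}{s\isactrlsup \isasymbullet}} and \isa{\bsteps{u}{u\isactrlsup \isasymbullet}} the first congruence of \snippet[153]{betas_cong} yields \isa{\bsteps{s\app u}{s\isactrlsup \isasymbullet\app u\isactrlsup \isasymbullet}}. Now the defining equation \isa{(s\app t)\isactrlsup \isasymbullet = s\isactrlsup \isasymbullet\appbeta t\isactrlsup \isasymbullet} means the target \isa{(s\app u)\isactrlsup \isasymbullet} is obtained from the ordinary application \isa{s\isactrlsup \isasymbullet\app u\isactrlsup \isasymbullet} by at most one further root-$\beta$-step, which is exactly the already-mentioned fact \snippet[238]{bullet_App}. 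Concatenating the two reductions closes the case.

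The only mild obstacle is the application case, and it is entirely absorbed by the auxiliary lemma \isa{\bstepe{s\isactrlsup \isasymbullet\app t\isactrlsup \isasymbullet}{(s\app t)\isactrlsup \isasymbullet}}: without it one would have to split on whether \isa{s\isactrlsup \isasymbullet} is an abstraction and, in the abstraction subcase, invoke the $\beta$-rule together with a freshness manoeuvre on the bound variable. Having that lemma pre-established reduces \textbf{Self} to two one-line congruence arguments and a reflexivity, which is presumably why it is singled out as a preliminary fact.
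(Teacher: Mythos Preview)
Your proof is correct and follows essentially the same route as the paper: structural induction on $t$, with the application case being the only non-trivial one. The one cosmetic difference is that the paper handles the application case by an \emph{explicit} case analysis on $t_1^\bullet$, whereas you invoke the already-stated fact \isa{\bstepe{s\isactrlsup \isasymbullet\app t\isactrlsup \isasymbullet}{(s\app t)\isactrlsup \isasymbullet}} to absorb that split; since that fact is itself proved by exactly this case distinction, the two arguments are the same modulo factoring, as you yourself observe in your final paragraph.
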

\begin{proof}
  By induction on $t$ using an additional case analysis on $t_1^\bullet$
  in the case that $t = t_1 \app t_2$.
\end{proof}
\begin{lemma}[Rhs] \label{lem:rhs}
  \label{rhs}
  For all terms $t$, $s$ and all variables $x$ we have \snippet[243]{rhs}.
\end{lemma}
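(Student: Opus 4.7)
My plan is structural induction on $t$, keeping $x$ and $s$ fixed. The base case $t = z$ splits on whether $z = x$: if so both sides equal $s^\bullet$, and otherwise both sides equal $z$. For the abstraction case $t = \abs{y}{u}$ I would exploit the nominal induction principle to assume $y$ fresh for $x$, $s$, and $s^\bullet$; both sides then unfold to the abstractions $\abs{y}{\subst{x}{s^\bullet}{u^\bullet}}$ and $\abs{y}{\subst{x}{s}{u}^\bullet}$ respectively, which are related by the induction hypothesis combined with congruence of $\to_\beta^*$ under $\lambda$.

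The interesting case is $t = u \app v$. Unfolding the definition of $\bullet$ on both sides reduces the goal to
\[ \subst{x}{s^\bullet}{(u^\bullet \appbeta v^\bullet)} \to_\beta^* \subst{x}{s}{u}^\bullet \appbeta \subst{x}{s}{v}^\bullet. \]
The two induction hypotheses deliver $\subst{x}{s^\bullet}{u^\bullet} \to_\beta^* \subst{x}{s}{u}^\bullet$ and $\subst{x}{s^\bullet}{v^\bullet} \to_\beta^* \subst{x}{s}{v}^\bullet$, which combine via the congruence property of $\appbeta$ recorded just after its definition. It therefore remains only to prove the distributivity-like auxiliary fact
\[ \subst{x}{w}{(a \appbeta b)} \to_\beta^* \subst{x}{w}{a} \appbeta \subst{x}{w}{b} \]
for arbitrary $a$, $b$, $w$. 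I would establish this by case analysis on $a$: when $a$ is a variable different from $x$, or an application, or an abstraction $\abs{y}{a'}$ with $y$ suitably fresh (here invoking the Substitution Lemma), both sides turn out to be syntactically equal.

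The remaining, and I expect most delicate, subcase of the auxiliary lemma is $a = x$. The left-hand side is then $w \app \subst{x}{w}{b}$, obtained by substituting into the already-unfolded $x \app b$, whereas the right-hand side is $w \appbeta \subst{x}{w}{b}$, which, if $w$ happens to be an abstraction, contracts a fresh root redex created by the substitution. The two thus differ by exactly zero or one $\beta$-step in the right direction, which is enough for $\to_\beta^*$. This is precisely where I suspect the pen-and-paper proof slips (cf.\ the footnote in the introduction): it is tempting to read the distributivity as a syntactic identity rather than a one-step reduction, especially because $\appbeta$ and $\app$ coincide whenever the left argument is not an abstraction.
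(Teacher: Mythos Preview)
Your argument is correct. The organisation, however, differs from the paper's. In the application case $t = t_1\app t_2$ the paper does \emph{not} factor through an auxiliary distributivity lemma for $\appbeta$; instead it performs the case split on the shape of $t_1^\bullet$ right inside the main induction. When $t_1^\bullet$ is not an abstraction, both sides are genuine applications and the inductive hypotheses plus the fact $\bstepe{s^\bullet\app t^\bullet}{(s\app t)^\bullet}$ finish the job. When $t_1^\bullet = \abs{y}{u}$, the paper applies the IH for $t_1$ to obtain $\bsteps{\abs{y}{\subst{x}{s^\bullet}{u}}}{\subst{x}{s}{t_1}^\bullet}$ and then invokes the coherence lemma (``$\bsteps{\abs{x}{s}}{t}$ implies $t = \abs{x}{u}$ with $\bsteps{s}{u}$'') to peel off the abstraction, before reassembling via the Substitution Lemma. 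So the paper trades your auxiliary lemma for a use of coherence of $\to_\beta^*$ with abstraction.

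Your route is arguably cleaner: the auxiliary fact $\subst{x}{w}{(a\appbeta b)} \to_\beta^* \subst{x}{w}{a}\appbeta\subst{x}{w}{b}$ plus the congruence of $\appbeta$ keeps everything at the level of $\appbeta$ and never needs to name the body of an abstraction obtained only after reduction. The paper's route, by contrast, needs the coherence/inversion lemma as an extra ingredient but avoids introducing a separate lemma about substitution and $\appbeta$. Your diagnosis of the delicate subcase $a = x$ (where a non-abstraction becomes an abstraction after substitution, so the two sides differ by one $\beta$-step rather than being equal) is exactly the phenomenon the footnote alludes to.
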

\begin{proof}
  By induction on $t$. The cases $t = x$ and $t = \abs{y}{t'}$ are
  straightforward.  If $t = t_1 \app t_2$ we continue by case analysis on
  $t_1^\bullet$.

  If $t_1^\bullet = \abs{y}{u}$ then
  $\bsteps{\abs{y}{\subst{x}{s^\bullet}{u}} =
    \subst{x}{s^\bullet}{t_1^\bullet}}{\subst{x}{s}{t_1}^\bullet}$ by induction
  hypothesis. Then, using coherence of $\beta$-reduction with abstraction,
  we can obtain a term $v$ with
  $\subst{x}{s}{t_1}^\bullet = \abs{y}{v}$ and
  $\bsteps{\subst{x}{s^\bullet}{u}}{v}$.  We then have
  $\subst{x}{s^\bullet}{(t_1 \app t_2)^\bullet} =
  \subst{x}{s^\bullet}{\subst{y}{t_2^\bullet}{u}} =
  \subst{y}{\subst{x}{s^\bullet}{t_2^\bullet}}{\subst{x}{s^\bullet}{u}}$, using
  the substitution lemma in the last step. Together with
  $\bsteps{\subst{x}{s^\bullet}{u}}{v}$ and the induction hypothesis
  for $t_2$ this yields
  $\bsteps{\subst{x}{s^\bullet}{(t_1 \app t_2)^\bullet}}
    {\subst{y}{\subst{x}{s}{t_2}^\bullet}{v}}$. Since we also
  have $\subst{x}{s}{(t_1 \app t_2)}^\bullet =
  (\subst{x}{s}{t_1} \app \subst{x}{s}{t_2})^\bullet =
  \subst{y}{\subst{x}{s}{t_2}^\bullet}{v}$ we can conclude this case.

  If $t_1^\bullet$ is not an abstraction. then from the induction hypothesis we have
  $\subst{x}{s^\bullet}{(t_1 \app t_2)^\bullet} =
  \bsteps{\subst{x}{s^\bullet}{t_1^\bullet} \app \subst{x}{s^\bullet}{t_2^\bullet}}
    {\subst{x}{s}{t_1}^\bullet \app \subst{x}{s}{t_2}^\bullet \to_\beta^=
      \subst{x}{s}{(t_1 \app t_2)}^\bullet}$. 
   \end{proof}
\begin{lemma}[Z]
  The full-superdevelopment function $\bullet$ yields the Z-property
  for $\to_\beta$, i.e., we have \snippet[289]{lambda_Z} for all terms $s$ and $t$.
\end{lemma}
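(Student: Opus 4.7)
The plan is to prove the Z-property by (nominal) induction on the derivation of \bstep{s}{t}, exhibiting both conjuncts in each case with the help of the Self- and Rhs-lemmas plus the congruence properties.

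In the base case a redex is contracted, say $s = (\abs{x}{s'})\app t'$ with $x\fresh t'$ and $t = \subst{x}{t'}{s'}$. By definition of $\bullet$ together with $x\fresh t'$ we get $s^\bullet = (\abs{x}{s'^\bullet})\appbeta t'^\bullet = \subst{x}{t'^\bullet}{s'^\bullet}$. For the first conjunct \bsteps{t}{s^\bullet}, apply Self to $s'$ and $t'$ and then use the substitution congruence from the \snippet[153]{betas_cong} lemmas to conclude \bsteps{\subst{x}{t'}{s'}}{\subst{x}{t'^\bullet}{s'^\bullet}}. For the second conjunct \bsteps{s^\bullet}{t^\bullet} I would invoke Rhs directly: it yields \bsteps{\subst{x}{t'^\bullet}{s'^\bullet}}{\subst{x}{t'}{s'}^\bullet}, which is exactly what is needed.

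In the abstraction case $s = \abs{x}{s_0}$, $t = \abs{x}{t_0}$ with \bstep{s_0}{t_0}, the induction hypothesis gives both \bsteps{t_0}{s_0^\bullet} and \bsteps{s_0^\bullet}{t_0^\bullet}. The defining equation of $\bullet$ on abstractions together with the abstraction-congruence rule of \snippet[153]{betas_cong} concludes both conjuncts. For the two symmetric application cases, say $s = s_0\app u$, $t = t_0\app u$ with \bstep{s_0}{t_0}, I would use the IH \bsteps{t_0}{s_0^\bullet}, Self on $u$, and the application-congruence rule to derive \bsteps{t_0\app u}{s_0^\bullet\app u^\bullet}; then \snippet[238]{bullet_App} yields $s_0^\bullet\app u^\bullet\to_\beta^= (s_0\app u)^\bullet$, which finishes the first conjunct. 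For the second conjunct I would apply the congruence property \snippet[279]{app_beta_Betas} for $\appbeta$ to \bsteps{s_0^\bullet}{t_0^\bullet} and reflexivity on $u^\bullet$, which gives \bsteps{s_0^\bullet\appbeta u^\bullet}{t_0^\bullet\appbeta u^\bullet}, i.e., \bsteps{(s_0\app u)^\bullet}{(t_0\app u)^\bullet}.

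The main subtlety I expect is the application case: the definition of $\bullet$ on applications uses $\appbeta$ (with a built-in root $\beta$-step when the left side is an abstraction), whereas the congruence hypothesis naturally produces ordinary $\app$. Bridging this mismatch is exactly the role of the auxiliary fact \snippet[238]{bullet_App}, so once that fact is in hand the case reduces to an elementary chaining of multisteps. The freshness side conditions for the base case come for free from Nominal Isabelle's strong induction principle, which lets us assume $x\fresh(s,t,u,\dots)$ whenever convenient.
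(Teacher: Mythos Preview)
Your proposal is correct and follows essentially the same route as the paper: induction on the derivation of $\bstep{s}{t}$, with the root case handled via Self (twice, through the substitution congruence) and Rhs, the abstraction case by the IH and the abstraction congruence, and the application cases by the IH together with Self, \snippet[238]{bullet_App} for the first conjunct and \snippet[279]{app_beta_Betas} for the second. The only difference is that you spell out the use of \snippet[279]{app_beta_Betas} where the paper merely says the second conjunct ``follows directly from the induction hypothesis,'' but that is exactly the intended justification.
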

\begin{proof}
  Assume $\bstep{s}{t}$. We continue by induction on the derivation of
  $\to_\beta$.

  If $\bstep{s}{t}$ is a root step then $s = (\abs{x}{s'}) \app {t'}$ and
  $t = \subst{x}{t'}{s'}$ for some $s'$ and $t'$. Then
  $s^\bullet = \subst{x}{t'^\bullet}{s'^\bullet}$ and thus
  $\bsteps{t}{s^\bullet}$ using Lemma~\ref{self} twice, so
  $\bsteps{s^\bullet}{t^\bullet}$ by Lemma~\ref{rhs}.

  The case where the step happens below an abstraction follows
  from the induction hypothesis.

  If the step happens in the left argument of an application then
  $s = s' \app u$ and $t = t' \app u$. From the induction hypothesis and
  Lemma~\ref{self} we
  have $\bsteps{t' \app u}{\bstepe{s'^\bullet \app u^\bullet}{(s \app u)^\bullet}}$.
  That also $\bsteps{(s' \app u)^\bullet}{(t' \app u)^\bullet}$ follows directly
  from the induction hypothesis.
  The case where the step happens in the right argument of an application is
  symmetric.
\end{proof}

\section{Perspective}

This note originated from the bold and vague claim of
Dehornoy and van~Oostrom~\cite{Deho:Oost:08} that the confluence proof for the
$\lambda\beta$-calculus by establishing the Z-property for the
full-superdevelopment map, is \emph{the shortest}. We present a brief
qualitative and quantitative analysis of this claim.

Three major methods in the literature for showing confluence of the $\lambda\beta$-calculus are:
\[ \mbox{complete developments} \models \diamond 
   \,\Longrightarrow\,
   \mbox{complete, full-developments} \models \angle
   \,\Longleftarrow\,
   \mbox{full-developments} \models \mbox{Z} \]
From left to right, that complete developments have the diamond ($\diamond$) property is 
due to Tait and Martin--L\"of~\cite[Section~3.2]{Bare:84},
that complete developments have the angle ($\angle$) property with respect to the full-development
function is due to Takahashi~\cite{T95} (cf.~\cite[Proposition~1.1.11]{Tere:03}),
and that full-developments have the Z-property is due to~\cite{Deho:Oost:08}.
From the fact that the second method needs the concepts of both the others,
it stands to reason that its formalization is not the shortest, 
as confirmed by a formalization of Nipkow~\cite{N01} and our quantitative analysis below.

Our proof varies on the above picture along yet another dimension, replacing
developments (due to Church and Rosser, cf.~\cite[Definition~11.2.11]{Bare:84})
by superdevelopments (due to Aczel, cf.~\cite[Section~2.7]{Raam:96}).
Where full-developments give a ``tight'' upper bound on the single-step
reducts of a given term, full-superdevelopments do not, and one may hope for
a simplification of the analysis because of it. This is confirmed
by our quantitative analysis below.
One may vary along this dimension as well: \emph{any} map $\bullet$ 
having the Z-property suffices as we show now.
\begin{definition}
  A relation $\to$ on $A$ has the \emph{angle} property for
  a map $\bullet$ from $A$ to $A$, and relation $\dstep$ on $A$,
  if ${\to} \subseteq {\dstep} \subseteq {\to^*}$ and
  $a \dstep b$ implies $b \dstep a^\bullet$.
\end{definition}
\begin{lemma}
  A relation $\to$ has the Z-property for map $\bullet$ if and only if it
  has the angle property for map $\bullet$ and some relation $\dstep$.
\end{lemma}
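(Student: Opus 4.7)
My plan is to prove the two implications separately. The forward direction demands a small insight (the choice of $\dstep$); the reverse direction is essentially a one-step diagram chase.

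For ``Z implies angle'', assume $\to$ has the Z-property for $\bullet$. I would take $\dstep$ to be the relation defined by: $a \dstep b$ iff $a \to^* b$ and $b \to^* a^\bullet$, i.e., $b$ lies between $a$ and its $\bullet$-expansion. With this choice the three clauses of the angle property unfold cleanly. The inclusion ${\to} \subseteq {\dstep}$ follows because $a \to b$ gives $a \to^* b$ and, by the first conjunct of Z, also $b \to^* a^\bullet$. The inclusion ${\dstep} \subseteq {\to^*}$ is built into the definition. Finally, to check $a \dstep b \Longrightarrow b \dstep a^\bullet$, I already have $b \to^* a^\bullet$ from the hypothesis, and I need $a^\bullet \to^* b^\bullet$; this is precisely the monotonicity consequence of Z noted immediately after the Z-property was introduced.

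For the reverse direction, assume the angle property holds for $\bullet$ and some $\dstep$, and fix $a \to b$. The inclusion ${\to} \subseteq {\dstep}$ gives $a \dstep b$; the angle implication then yields $b \dstep a^\bullet$, and applying the same implication once more, now starting from $b \dstep a^\bullet$, gives $a^\bullet \dstep b^\bullet$. Collapsing both of these via ${\dstep} \subseteq {\to^*}$ delivers the two conjuncts $b \to^* a^\bullet$ and $a^\bullet \to^* b^\bullet$ of the Z-property.

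There is no serious obstacle to overcome; the only genuinely non-mechanical step is guessing the intermediate relation $\dstep$ in the forward direction, after which both implications reduce to a handful of direct unfoldings. It is worth noting that the $\dstep$ chosen for the forward direction is the largest possible one, so the lemma actually says that the Z-property is equivalent to the existence of \emph{any} relation sandwiched between $\to$ and $\to^*$ that satisfies the angle implication.
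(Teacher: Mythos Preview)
Your proof is correct and matches the paper's own argument essentially line for line: the paper uses the very same definition of $\dstep$ (there called the \emph{$\bullet$-development} relation) and appeals to the same ingredients---the Z-property for the inclusion ${\to} \subseteq {\dstep}$, the definition for ${\dstep} \subseteq {\to^*}$, monotonicity for the angle clause, and a double application of the angle implication for the converse. Your closing remark that this $\dstep$ is the largest relation with the angle property is a pleasant extra observation not made explicit in the paper.
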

\newpage
\begin{proof}
  First assume that $\dstep$ has the angle property for map $\bullet$ and relation $\dstep$.
  To show that $\to$ has Z assume $a \to b$. Then by assumption we also have $a \dstep b$ and
  hence $b \dstep a^\bullet$ and $a^\bullet \dstep b^\bullet$, by applying the
  angle property twice, which together with ${\dstep} \subseteq {\to^*}$
  yields Z.

  Now assume $\to$ has the Z-property. We define the \emph{$\bullet$-development}\footnote{%
For the full-development map $\bullet$ such \emph{syntax-free} $\bullet$-developments may differ
from the usual ones, e.g.\ for $(\abs{y}{I}) \app ((\abs{x}{x \app x}) \app I)$.
We conjecture that on terminating, non-erasing and non-collapsing $\lambda$-terms they coincide.
}
  relation by $a \dstep b$ if $a \to^* b$ and $b \to^* a^\bullet$. 
  Then ${\to} \subseteq {\dstep} \subseteq {\to^*}$ follows
  from the definition of $\dstep$ and the Z-property.  The angle
  itself directly follows from the definition of $\dstep$ and monotonicity of
  $\bullet$.
\end{proof}
We turn to the quantitative analysis of the claim of~\cite{Deho:Oost:08}.
Formalizing confluence of the $\lambda\beta$-calculus has a long history
for which we refer the reader to~\cite{N01}.
We compare our formalization in Isabelle to two other such,
Nipkow's formalization in Isabelle/HOL~\cite{N01} (as currently distributed with Isabelle)
and Urban and Arnaud's formalization in Nominal Isabelle.\footnote{%
\scriptsize
\url{http://www.inf.kcl.ac.uk/staff/urbanc/cgi-bin/repos.cgi/nominal2/file/d79e936e30ea/Nominal/Ex/CR.thy}}
There are two major differences of the present proof to Nipkow's formalization.
On the one hand Nipkow uses de~Brujin indices to represent $\lambda$-terms. 
This considerably increases the size of the formal theories -- almost 200 lines of 
the roughly 550 line development are devoted to setting up terms and the required manipulations
on indices. Our development is 300 lines (60 of which are used for our ad hoc
Eisbach methods). The second difference is the actual technique used to show
confluence: Nipkow proceeds by establishing the diamond property for 
complete developments.
Urban and Arnaud proceed by establishing the
angle property for multisteps with respect to 
the full-development function.
This results in a 100 line increase compared to our formalization.

\newcommand{\doi}[1]{\href{http://dx.doi.org/#1}{doi:\nolinkurl{#1}}}\newcommand{\afp}[1]{\url{https://www.isa-afp.org/entries/#1.shtml}}\providecommand{\noopsort}[1]{}

\end{document}